\def\ps@pprintTitle{%
  \def\@oddhead{}%
  \def\@evenhead{}%
  \def\@oddfoot{}%
  \def\@evenfoot{}%
}
\theoremstyle{plain}
\newtheorem{proposition}{Proposition}
\newtheorem{theorem}{Theorem}
\newtheorem{remark}{Remark}
\theoremstyle{definition}
\newtheorem{example}{Example}
\newtheorem{definition}{Definition}
\newcommand{\domain}{\mathcal{R} \mathfrak{(X)}}
\newcommand{\codomain}{\mathcal{R}(X)}
\begin{document}

\begin{frontmatter}

\title{Independence Axioms in Social Ranking}

 \author[label1]{Takahiro Suzuki\textsuperscript{[0000-0002-3436-6831]}}
  \author[label2]{Michele Aleandri\textsuperscript{[0000-0002-5177-8176]}}
   \author[label3]{Stefano Moretti\textsuperscript{[0000-0003-3627-3257]}}
   
 \affiliation[label1]{organization={Department of Civil Engineering, 
 Graduate School of Engineering, The University of Tokyo},
             addressline={Hongo Campus, 7-3-1 Hongo, Bunkyo-ku}, 
            city={Tokyo},
            postcode={113-8656}, 
            country={Japan}
            suzuki-tkenmgt@g.ecc.u-tokyo.ac.jp}

 \affiliation[label2]{organization={LUISS University},
             addressline={Viale Romania, 32},
             city={Rome},
             postcode={00197},
             country={Italy} \\
             maleandri@luiss.it}
\affiliation[label3]{organization={LAMSADE, CNRS, Universit{\'e} Paris-Dauphine, Universit{\'e} PSL},
             city={Paris},
             postcode={75016},
             country={France} \\
stefano.moretti@lamsade.dauphine.fr}

\begin{abstract}
Independence from non-essential changes in input information is a widely recognized axiom in social choice theory. This independence reduces the cost of specifying and/or analyzing non-essential data. This study makes a comprehensive analysis of independence axioms in the context of social ranking solutions (SRSs). We consider seven independence axioms (two of which are new) and provide a novel characterization of the lexicographic excellence solution and plurality by substituting these independence axioms in the existing characterization of the intersection initial segment rule. The characterizations highlight the differences among the three SRSs in terms of independence. 
\end{abstract}

\begin{keyword}
Social ranking solution\sep Lexicographic excellence solution \sep Plurality \sep Independence axioms
\end{keyword}
\end{frontmatter}

\section{Introduction}\label{section:introduction}
Independence from irrelevant changes in input data is a well-known principle in social choice theory. A rule satisfying such independence enables us to focus only on the essential components of the input information, thereby lowering the expense of gathering or evaluating unnecessary information. The most well-known example of independence is the \textit{independence of the irrelevant alternatives} (IIA), introduced by \cite{Arrow1951} (see, e.g., \cite{Patty2019} for further justification of IIA). This study examines independence axioms in the model of social ranking solutions (SRSs). 

SRS is a function that derives a contribution ranking of individuals from a performance ranking of their coalitions \cite{Moretti2017}. The literature on SRSs includes numerous examples of independence axioms, such as \textit{independence of the irrelevant coalitions} (IIC) \cite{Moretti2017}, \textit{independence from the worst set} (IWS) and\textit{ independence from the best set} (IBS) \cite{Bernardi2019,Algaba2021,Aleandri2024}, \textit{slide independence} (SI) \cite{suzuki2025mill}, and \textit{independence of the non-unanimous improvements} (INUI) \cite{Suzuki2025}. These are considered instances of independence axioms because they all require that the outcome of an SRS must be independent of the "non-essential change" in the input information; their difference lies in what they assume non-essential. This study focuses on these five axioms along with two new independence axioms: \textit{strong SI} (SSI) and \textit{tops-only }(TO). 

We focus on three SRSs: \textit{lexicographic excellence solution} (lex-cel), \textit{plurality}, and \textit{intersection initial segment} (IIS). Table \ref{table:Characterizations in the literature} presents a literature review of axiomatic studies on these SRSs.

\begin{table}[h]
\centering
    \caption{Characterizations in the literature}
    \label{table:Characterizations in the literature}

\begin{tabular}{|c|c|c|}
\hline
lex-cel \cite{Bernardi2019} & NT, CA, MON, and \textbf{IWS} &  \\\hline
lex-cel \cite{Algaba2021} & NT, WCA, MON, and \textbf{IWS} & \\
\hline
lex-cel \cite{Suzuki2024}& NT, WCA, CON, and CD & *variable domain of coalitions\\
\hline
lex-cel \cite{Aleandri2024} & SD, SYM, CA, and \textbf{IWS} & \\
\hline
lex-cel \cite{Suzuki2024c} & LIN, SP, CA, and reward cond. & *linear domain\\
\hline
plurality \cite{suzuki2025mill}& STAG, DMON, and \textbf{SI} & \\
\hline
IIS \cite{Suzuki2025} & NT, WIVIP, \textbf{IWS, IBS, and INUI} & *opinion aggregator\\
\hline
\end{tabular}
\par
\vspace{5pt}
{\footnotesize The abbreviations in the table are as follows: NT: neutrality; CA: coalitional anonymity; MON: monotonicity; WCA: weak coalitional anonymity; CON: consistency; CD: complete dominance; SD: strict desirability; SYM: symmetry; LIN: linear; SP: sabotage-proof; STAG: strong top agreement; DMON: down-monotonicity. The axioms in bold font are instances of what we refer to as independence axioms.}
\end{table}

\begin{remark}
Axioms CA, NT and WCA in Table \ref{table:Characterizations in the literature} could also be interpreted as independence axioms. Indeed, these axioms state that a SRS  should remain invariant under permutations of some individuals (as in the NT axiom) or of certain coalitions (as in the CA and WCA axioms). While this invariance can be viewed as a form of independence, we emphasize that the independence axioms considered in this paper concern changes in the ranking of some coalitions and do not involve any notion of permutation (of individuals or of coalitions). Furthermore, the axioms CA, NT, and WCA do not play a direct role in the proofs of theorems presented in this study.  
\end{remark}

The main contribution of this study is to provide a new characterization of lex-cel (Theorem \ref{theorem:lex-cel}) and plurality (Theorem \ref{theorem:plurality}) by replacing independence axioms in the known characterization of IIS in \cite{Suzuki2025}. Table \ref{table:Overview of our results} provides an overview of these theorems. In the first row, NT is enclosed in the parentheses because it is not used in the characterization, although lex-cel satisfies NT.
\begin{table}[h]
\centering
    \caption{Overview of our results}
    \label{table:Overview of our results}

\begin{tabular}{|c|c|}
\hline
Lex-cel (Theorem \ref{theorem:lex-cel})
 & (NT,) WIVIP, \textbf{IWS, and SSI} \\
 \hline
 Plurality (Theorem \ref{theorem:plurality}) & NT, WIVIP, \textbf{TO, and SI} \\
 \hline
\end{tabular}

\par
\vspace{5pt}{
{\footnotesize The abbreviations in the table are as follows: SSI: strong slide independence; TO: tops-only, SI: slide independence.}}
\end{table}

Compared with the bottom row of Table \ref{table:Characterizations in the literature}
, our characterizations are derived by substituting only independence axioms. In this sense, our results reveal a hidden relationship between lex-cel, plurality, and IIS in terms of independence axioms. 

The remainder of this paper is organized as follows. Section \ref{section:model} describes the basic model, including the definition of the independence axioms. Section \ref{section:results} presents our results. Section \ref{section:conclusion} concludes the study.

\section{Model}
\label{section:model}
Let $A$ be a finite set. A binary relation $R\subseteq A\times A$ is called \textit{reflexive} if $aRa$ for all $a\in A$; \textit{complete} if, for any $a,b\in A$, either $aRb$ or $bRa$ holds; and transitive if, for any $a,b,c\in A$, when $aRb$ and $bRc$ hold, then $aRc$ also holds. A reflexive, complete, and transitive binary relation is called a \textit{weak order}. The set of all weak orders on $A$ is denoted by $\mathcal{R}(A)$. For a binary relation $R\subseteq A\times A$, its asymmetric part $P(R)$ and symmetric part $I(R)$ are defined as follows:
\begin{align*}
&P(R):=\{(a,b)\mid (a,b)\in R \text{ and } (b,a)\notin R\},\\
&I(R):=\{(a,b)\mid (a,b)\in R\text{ and }(b,a)\in R\}.
\end{align*}
For a binary relation $\succsim$, we usually denote $P(\succsim)$ as $\succ$ and $I(\succsim)$ as $\sim$. For a binary relation $R\subseteq A\times A$ and $B\subseteq A$, we define the restriction of $R$ into $B$, denoted as $R\mid_{B}$, as $R\mid_{B}:=R\cap(B\times B)$. For a positive integer $k$, we denote by $[k]:=\{m\in \mathbb{N}\mid 1\leq m\leq k \}$ the set of all positive integers equal to or less than $k$. 
For a binary relation $\succsim$ on $A$, we write $\succsim:\Sigma_{1}\succ\cdots\succ\Sigma_{l}$ if (i) for each $k\in[l]$, $\Sigma_{k}$ is an equivalence class, and (ii) $\Sigma_{1},\cdots,\Sigma_{l}$ are ordered according to $\succsim$ (i.e., for any $k,k'\in[l]$ with $k<k'$, we have $S\succ T$ for all $S\in\Sigma_{k}$ and $T\in\Sigma_{k'}$).

Let $X=\{1,2,\cdots,n\}$ be the set of individuals. Let $\mathfrak{X}:=2^X$ be the power set of $X$. 

\begin{definition}
\label{definition:SRS}
A \textit{SRS} is a function of $R:\domain\rightarrow \codomain$.
\end{definition}
To simplify the notation, for a SRS $R$ and $\succsim\in\domain$, we denote $R(\succsim)$, $P(R(\succsim))$, and $I(R(\succsim))$ as $R_\succsim$, $P_{\succsim}$, and $I_{\succsim}$, respectively. 

We assume that axiom $A$ is an \textit{independence axiom} if it requires the SRS $R$ to be robust against insignificant changes in its input. Typically, an independence axiom states that for any $\succsim,\succsim'\in\domain$ and any $x,y\in X$, if the difference between $\succsim$ and $\succsim'$ is not substantial with respect to $x$ and $y$, then $R_{\succsim}$ and $R_{\succsim'}$ are closely related in terms of $x$ and $y$ (e.g., $R_{\succsim}\mid_{\{x,y\}}=R_{\succsim'}\mid_{\{x,y\}}$ or $xP_{\succsim}y\Rightarrow xP_{\succsim'}y$). 

Seven independence axioms (IWS, IBS, TO, SI, SSI, INUI, and IIC) are defined as follows. 

\begin{definition}
\label{definition:IWS and IBS}
An SRS $R:\domain\rightarrow\codomain$ satisfies \textit{IWS} if, for any $\succsim,\succsim'\in\domain$ with $\succsim:\Sigma_{1}\succ\cdots\succ\Sigma_{l-1}\succ\Sigma_{l}$ $(l\geq 2)$ and $\succsim':\Sigma_{1}\succ'\cdots\succ'\Sigma_{l-1}\succ'\Gamma_{1}\succ'\cdots\succ'\Gamma_{l'}$, and for any $x,y\in X$, it holds that $xP_{\succsim}y\Rightarrow xP_{\succsim'} y$. Similarly, $R$ satisfies \textit{IBS} if for any $\succsim,\succsim'\in\domain$ with $\succsim:\Sigma_{1}\succ\Sigma_{2}\cdots\succ\Sigma_{l}$ $(l\geq 2)$ and $\succsim':\Gamma_{1}\succ'\cdots\Gamma_{l'}\succ'\Sigma_{2}\succ'\cdots\Sigma_{l}$, and for any $x,y\in X$, it holds that $xP_{\succsim}y\Rightarrow xP_{\succsim'} y$.
\end{definition}

IWS and IBS are proposed in \cite{Bernardi2019} to characterize lex-cel and its dual, respectively. IWS and IBS assume that decomposing the worst (best) equivalence class is not essential and that a strict relationship is preserved when such a decomposition is performed. 

\begin{definition}
\label{definition:TO}
An SRS $R:\domain\rightarrow\codomain$ satisfies \textit{TO} if, for any $\succsim,\succsim'\in\domain$ with $\succsim:\Sigma_{1}\succ\cdots\succ\Sigma_{l}$ and $\succsim:\Sigma'_1\succ'\cdots\succ'\Sigma'_{l'}$, if $\Sigma_{1}=\Sigma'_{1}$, then we obtain $R_{\succsim}=R_{\succsim'}$. 
\end{definition}

TO is a standard axiom in the study of plurality voting, which requires that social choice depends solely on voters’ top-ranked alternatives \cite{Yeh2008,Saitoh2022,Sekiguchi2012}. The TO is a direct application to the study of SRSs. Specifically, it stipulates that the outcome of the SRS depends only on the top equivalence class. Thus, it assumes that equivalence classes below the top are not essential for social rankings. 

\begin{remark}
\label{remark:independence from the ends} 
In a sense, IWS, IBS, and TO all relate to independence from the ``extremal elements''. IWS states that the SRS is independent of the worst elements (i.e., the worst equivalence class), IBS states that the SRS is independent of the best elements (i.e., the best equivalence class), and TO states that the SRS is independent of the classes that are worse than the best equivalence class.  
\end{remark}

To define the next axioms, we first need to introduce some further preliminary notation. For every element $x \in X$ and every family of sets $\Delta \subseteq \mathfrak{X}$, we denote as $\Delta[x]=\{S \in \Delta\mid x \in X\}$ the family of sets in $\Delta$ containing element $x$, and as $\bigcap \Delta=\bigcap_{S\in \Delta}S$ the intersection of sets contained in $\Delta$.

\begin{definition}
\label{definition:SI and SSI}
An SRS $R:\domain\rightarrow\codomain$ satisfies \textit{SI} (\textit{SSI}) if for any $\succsim,\succsim'\in\domain$ with $\succsim:\Sigma_{1}\succ\cdots\succ\Sigma_{l}$ and $\succsim':\Sigma'_{1}\succ'\cdots\succ'\Sigma'_{l}$, $k_1,k_2\in[l]$, $x,y\in X$,$\Delta\subsetneq\Sigma_{k_1}$ ($\Delta\subseteq \Sigma_{k_1}$) with $|\Delta[x]|=|\Delta[y]|$, $\Sigma'_{k_1}=\Sigma_{k_1}\setminus\Delta$, $\Sigma'_{k_2}=\Sigma_{k_2}\cup\Delta$, and $\Sigma'_{k}=\Sigma_{k}$ for all $k\neq k_1,k_2$, it holds that $R_\succsim\mid_{\{x,y\}}=R_{\succsim'}\mid_{\{x,y\}}$. 
\end{definition}

\begin{definition}
\label{definition:INUI}
An SRS $R:\domain\rightarrow\codomain$ satisfies \textit{INUI} if for any $\succsim\in\domain$ with $\succsim:\Sigma_{1}\succ\cdots\succ\Sigma_{l}$, $\Delta\subseteq\Sigma_{k}$ with $x,y\notin\bigcap \Delta$, it holds that $R_\succsim\mid_{\{x,y\}}=R_{\succsim'}\mid_{\{x,y\}}$, where $\succsim':\Sigma_{1}\succ'\cdots\succ'\Sigma_{k-1}\succ'\Delta\succ'\Sigma_{k}\setminus\Delta\succ'\cdots\succ'\Sigma_{l}$.  
\end{definition}

SI, SSI, and INUI all assume that shifting $\Delta$, a subset of some equivalence classes, is not essential to the social ranking between $x$ and $y$ as long as $\Delta$ satisfies some symmetry conditions with respect to $x$ and $y$. The primary difference among these axioms lies in the symmetry condition on $\Delta$. For SI and SSI, $\Delta$ must satisfy $|\Delta[x]|=|\Delta[y]|$ (i.e., the number of coalitions in $\Delta$ including $x$ is the same as that in $\Delta$ including $y$). For INUI, $x,y\notin \bigcap \Delta$ (i.e., none of $x$ and $y$ belong to the intersection of $\Delta$). Thus, SI and SSI focus on the number of coalitions, including the individuals, whereas INUI focuses on their intersections. 

Subsequently, SI requires that even if $\Delta$ shifts from an equivalence class $\Sigma_{k_{1}}(\supsetneq \Delta)$ to $\Sigma_{k_{2}}$, the social ranking between $x$ and $y$ remains the same. SSI further refers to the case where $\Sigma_{k_{1}}=\Delta$ (accordingly, the number of equivalence classes can decrease by the shift). INUI demands that the upward shift in $\Delta$ does not affect the social ranking between $x$ and $y$. 

SI is proposed in \cite{suzuki2025mill} to characterize plurality (the requirement of SI is close to the cancellation property in \cite{Barbera2023a,Young1974a}). Our SSI is a new axiom that strengthens SI. INUI is proposed in \cite{Suzuki2025} to characterize IIS. 

\begin{remark}
While IBS and IWS refer to only the best/worst equivalence class, SI refers to the slide of a family of subsets $\Delta$ between any two equivalence class. This motivates us to consider an apparently weaker axiom, \textit{Top slide independence} (\textit{Top-SI}), which demands the condition of SI only for $k_{1}=1$ (i.e., considering the slide from/into the top equivalence class). However, we can confirm that Top-SI is equivalent to SI. SI$\Rightarrow$Top-SI is straightforward. We briefly show the converse. Let $\succsim:\Sigma_{1}\succ\cdots\succ\Sigma_{l}$, and let $\hat{k}_{1},\hat{k}_{2}\in[l]$. We define $\succsim_{1}(:=\succsim),\succsim_{2},\succsim_{3}\in\domain$ as follows. 
\begin{align*}
&\succsim_{1}:\Sigma_{1}\succ_{1}\cdots\succ_{1}\Sigma_{\hat{k}_{1}}\succ_{1}\cdots\succ_{1}\Sigma_{\hat{k}_{2}}\succ_{1}\cdots\succ_{1}\Sigma_{l}\\
&\succsim_{2}:\Sigma_{1}\succ_{2}\cdots\succ_{2}\Sigma_{\hat{k}_{1}}\setminus\Gamma\succ_{2}\cdots\succ_{2}\Sigma_{\hat{k}_{2}}\cup\Gamma\succ_{2}\cdots\succ_{2}\Sigma_{l}\\
&\succsim_{3}:\Sigma_{1}\cup\Gamma\succ_{3}\cdots\succ_{3}\Sigma_{\hat{k}_{1}}\setminus\Gamma\succ_{3}\cdots\succ_{3}\Sigma_{\hat{k}_{2}}\succ_{3}\cdots\succ_{3}\Sigma_{l}.
\end{align*}

Then, we can apply Top-SI for the pair $\succsim_{3}$ and $\succsim_{2}$ (by $k_{1}=1 \mbox{ and } k_{2}=\hat{k}_{2}$) to obtain that $R_{\succsim_{3}}\mid_{\{1,2\}} =R_{\succsim_{2}}\mid_{\{1,2\}} $. Furthermore, we can apply 1-SI for the pair $\succsim_{3}$ and $\succsim_{1}$ (by $k_{1}=1$ and $k_{2}=\hat{k}_{1}$) to obtain that $R_{\succsim_{3}}\mid_{\{1,2\}} =R_{\succsim_{1}}\mid_{\{1,2\}} $. These imply that $R_{\succsim_{2}}\mid_{\{1,2\}} =R_{\succsim_{1}}\mid_{\{1,2\}} $, which means that the $R$ satisfies SI.
\end{remark}

\begin{definition}
\label{IIC}
An SRS $R:\domain\rightarrow\codomain$ satisfies \textit{IIC} if for any $\succsim,\succsim'\in\domain$, and for any $x,y\in X$, if $\left[S\cup\{x\}\succsim S\cup\{y\}\iff S\cup\{x\}\succsim' S\cup\{y\}\right]$ for all $S\subseteq X\setminus\{x,y\}$, then it holds that $xR_{\succsim}y\iff xR_{\succsim'}y$. 
\end{definition}

IIC requires that the social ranking between $x$ and $y$ be determined only by the ceteris paribus (CP) comparison $S\cup\{x\}$ and $S\cup\{y\}$ for each $S\subseteq X\setminus\{x,y\}$ \cite{Moretti2017,Suzuki2021} (see \cite{Suzuki2024a} for further discussion on CP comparison). In this study, IIC assumes that the transformation of $\succsim$ that preserves such CP comparison is not essential for determining the social ranking between $x$ and $y$. 

We have introduced seven independence axioms (IWS, IBS, TO, SI, SSI, INUI, and IIC). Next, we will introduce some complementary axioms. For a permutation $\pi:X\rightarrow X$ and $\succsim\in\domain$, we define $\succsim_\pi\in\domain$ as 
$$\succsim_\pi:=\{(\pi(S),\pi(T))\mid S\succsim T\}.$$

\begin{definition}
An SRS $R:\domain\rightarrow\codomain$ satisfies \textit{NT} if, for any permutation $\pi$ on $X$ and $x,y\in X$, we obtain $\pi(x)R_{\succsim_{\pi}}\pi(y)\iff xR_\succsim y$.
\end{definition}

NT is a standard axiom that assumes that individuals’ names are irrelevant in determining social rankings. It has also been referenced in literature on SRSs (\cite{Bernardi2019,Khani2019,Haret2019a}).

\begin{definition}
An SRS $R:\domain\rightarrow\codomain$ satisfies WIVIP if, for any $\succsim\in\domain$ with $\succsim:\Sigma_{1}\succ\Sigma_{2}$, and for any $x\in\bigcap \Sigma_{1}$ and $y\in(\bigcap\Sigma_{1})^c$, we have that $xP_{\succsim}y$.
\end{definition}

WIVIP is applied when we possess a dichotomous ranking $\succsim:\Sigma_{1}\succ\Sigma_{2}$, where every coalition is judged either as the better (elements of $\Sigma_{1}$) or the worse (elements of $\Sigma_{2}$). In such a case, WIVIP demands that individuals who belong to all the better coalitions are socially ranked higher than those individuals who do not. WIVIP is proposed in \cite{Suzuki2025}.

To formally define lex-cel $R^{L}$, plurality $R^{P}$, and IIS $R^{IIS}$, we introduce additional notation. For $\succsim\in\domain$ with $\succsim:\Sigma_{1}\succ\Sigma_{2}\succ\cdots \succ\Sigma_{l}$ and $x\in X$, let $\theta_{\succsim}(x):=(x_{1},x_{2},\cdots,x_{l})$, where $x_{k}:=|\Sigma_{k}[x]|$, that is, the number of elements in $\Sigma_{k}$ that contains $x$. Furthermore, let
$T_{k} := \bigcap  ( \bigcup_{k' \leq k} \Sigma_{k'} ) = \{ x \in X \mid \forall k' \leq k,\forall S \in \Sigma_{k'},x \in S \}$; let 
\begin{equation}
e_{\succsim}(x):=
\begin{cases}
0 &\text{if }x\notin T_{1},\\
\max\{k\in[l]\mid x\in T_{k}\} &\text{otherwise.}
\end{cases}
\end{equation}
For two finite vectors $a:=(a_{1},\cdots,a_{l})$ and $b:=(b_{1},\cdots ,b_{l})$, we denote $a\geq^{L}b$ if and only if $a_{k}=b_{k} \text{ for all } k\in[l]$ or $\exists \hat{k}\in[l]$  such that $x_{k}=y_{k}$ for all $k<\hat{k}$ and $x_{\hat{k}}>y_{\hat{k}}.$ Furthermore, we denote $a\geq^{L}b$ if and only if $a\geq^{L}b$ and $\neg(b\geq^{L}a)$.

\begin{definition}\label{def:threesrs}
We define SRSs $R^L,R^P,R^{IIS}$ as follows. For any $\succsim\in\domain$ with $\succsim:\Sigma_{1}\succ\cdots\succ\Sigma_{l}$ and $x,y\in X$, let
\begin{align*}
&xR^{L}_{\succsim}y \iff \theta_{\succsim}(x)\geq^{L}\theta_{\succsim}(y), \\
&xR^{P}_{\succsim}y \iff x_{1}\geq y_{1},\\
&xR^{IIS}_{\succsim}y \iff e_{\succsim}(x)\geq e_{\succsim}(y).
\end{align*}
\end{definition}

\begin{example}
Let $X=\{1,2,3\}$ and $\succsim\in\domain$ such that:
\[
\Sigma_1 =\{\{1\}, \{3\},\{1,2\}\} \succ \Sigma_2 =\{ \{2\}\} \succ \Sigma_3=\{\{1,3\}, \{2,3\}, \{1,2,3\}, \emptyset\}. 
\]
We have that $\theta_{\succsim}(1)=(2,0,2)$, $\theta_{\succsim}(2)=(1,1,2)$ and $\theta_{\succsim}(3)=(1,0,3)$, while $e_{\succsim}(x)=0$ doe each $x \in X$. So, the three SRSs of Definition \ref{def:threesrs} yields the following weak orders on $X$:
\begin{itemize}
    \item
$1\ P^{L}_{\succsim}\ 2\ P^{L}_{\succsim}\ 3,$
\item $1\ P^{P}_{\succsim}\ 2 \ I^{P}_{\succsim}\ 3$, 
\item $
1 \ I^{IIS}_{\succsim}\ 2 \ I^{IIS}_{\succsim}\ 3$.
\end{itemize}
\end{example}

As shown in the previous example, $R^{L}$, $R^{P}$ and $R^{IIS}$ are well distinct SRSs. Nevertheless, for any $\succsim\in\domain$ and $x,y \in X$, it is easy to check from the definition of SRSs $R^{L}$, $R^{P}$ and $R^{IIS}$ that the following logical implications hold:

\begin{proposition}
For any $\succsim\in\domain$ and $x,y\in X$, we have:
\begin{itemize}
    \item[i)] if $x I^{L}_{\succsim} y$, then $x I^{P}_{\succsim} y$ and $x I^{IIS}_{\succsim} y$;
    \item[ii)] if $x P^{L}_{\succsim} y$, then $x R^{P}_{\succsim} y$ and $x R^{IIS}_{\succsim} y$;
        \item[iii)] if $x P^{P}_{\succsim}y$ , then $x P^{L}_{\succsim} y$ and $x R^{IIS}_{\succsim} y$;
        \item[iv)] if $x P^{IIS}_{\succsim} y$, then $x P^{L}_{\succsim}\ y$ and $x R^{P}_{\succsim} y$.
\end{itemize}
\end{proposition}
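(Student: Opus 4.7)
The plan is to prove each implication by a direct case analysis from the definitions of $\theta_{\succsim}$, $e_{\succsim}$, and the lexicographic order $\geq^{L}$. The key reduction I will lean on is the equivalence, for $k\geq 1$, between $e_{\succsim}(x)\geq k$ and $x_{k'}=|\Sigma_{k'}|$ for every $k'\leq k$; this follows immediately from the definition of $T_{k}$, since $x$ lying in every coalition of $\Sigma_{k'}$ is exactly $x_{k'}=|\Sigma_{k'}|$. As a special case, $e_{\succsim}(y)=0$ translates to $y_{1}<|\Sigma_{1}|$. Together with the first-coordinate characterization of $R^{P}$, this equivalence lets one read off both $R^{P}$- and $R^{IIS}$-comparisons directly from $\theta_{\succsim}(x)$ and $\theta_{\succsim}(y)$.

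For (i) and (ii) I would start from the definition of $\geq^{L}$ applied to $\theta_{\succsim}(x)$ and $\theta_{\succsim}(y)$. In (i), equality of the two vectors instantly yields $x_{1}=y_{1}$ (the plurality implication) and makes the condition $x_{k}=|\Sigma_{k}|$ coincide with $y_{k}=|\Sigma_{k}|$ for every $k$, whence $e_{\succsim}(x)=e_{\succsim}(y)$ by the central observation. In (ii), let $\hat{k}$ be the leading index of inequality, so that $x_{k}=y_{k}$ for $k<\hat{k}$ and $x_{\hat{k}}>y_{\hat{k}}$; then $y_{\hat{k}}<|\Sigma_{\hat{k}}|$ forces $e_{\succsim}(y)\leq \hat{k}-1$, and the agreement on earlier coordinates propagates to $e_{\succsim}(x)\geq e_{\succsim}(y)$. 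The plurality statement in (ii) splits trivially into the subcases $\hat{k}=1$ (giving $x_{1}>y_{1}$) and $\hat{k}>1$ (giving $x_{1}=y_{1}$).

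Case (iii) is essentially the subcase $\hat{k}=1$ of (ii) with a small twist: $x_{1}>y_{1}$ places $x$ strictly above $y$ in the $\geq^{L}$ comparison and forces $y_{1}<|\Sigma_{1}|$, hence $e_{\succsim}(y)=0\leq e_{\succsim}(x)$. For (iv) I would set $m:=e_{\succsim}(y)$ and read the central observation in both directions: from $m+1\leq e_{\succsim}(x)$ one gets $x_{k}=|\Sigma_{k}|$ for every $k\leq m+1$, while $m=e_{\succsim}(y)$ gives $y_{k}=|\Sigma_{k}|$ for $k\leq m$ together with $y_{m+1}<|\Sigma_{m+1}|$. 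Thus the first discrepancy between $\theta_{\succsim}(x)$ and $\theta_{\succsim}(y)$ occurs at position $m+1$ and favors $x$, yielding $xP^{L}_{\succsim}y$; the plurality conclusion splits into $m\geq 1$ (where $x_{1}=y_{1}=|\Sigma_{1}|$) and $m=0$ (where $x_{1}=|\Sigma_{1}|>y_{1}$).

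The proof is essentially routine; the only place requiring genuine care is the boundary $m=0$ in (iv), where one must invoke the $x\notin T_{1}$ branch of the definition of $e_{\succsim}$ in place of the $\max$ formula, and translate the inequality $e_{\succsim}(x)>0$ back through the central observation to obtain $x_{1}=|\Sigma_{1}|$.
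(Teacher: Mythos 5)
Your proposal is correct and takes essentially the same route as the paper: a direct verification from the definitions of $\theta_{\succsim}$, $e_{\succsim}$ and $\geq^{L}$, with your explicit ``central observation'' that $e_{\succsim}(x)\geq k$ iff $x_{k'}=|\Sigma_{k'}|$ for all $k'\leq k$ being exactly what the paper uses implicitly in its case splits on $e_{\succsim}$ and $|\Sigma_1|$. The only minor difference is that you prove (ii) directly via the leading index $\hat{k}$, whereas the paper deduces (ii) from (iii) and (iv) by completeness; both are fine.
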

\begin{proof}
Implication $(i)$ follows directly from the fact that if $x I^{L}_{\succsim} y$ then, by definition of $R^{L}$, $\theta_{\succsim}(x)=\theta_{\succsim}(y)$.   

To prove implication $(iii)$, consider $x,y \in X$ such that $x P^{P}_{\succsim} y$. Then $x_1 >y_1$, which directly implies $\theta_{\succsim}(x)>^{L}\theta_{\succsim}(y)$  and, by definition, $x P^{L}_{\succsim} y$. Now, let $\Sigma_1$ be the best equivalence class in $\succsim$. To prove that $x R^{IIS}_{\succsim} y$, we distinguish two cases: either $|\Sigma_1|=x_1>y_1$, where it necessarily holds that  $e_{\succsim}(x)\geq 1$ and $e_{\succsim}(y) =0$ (so, $x P^{IIS}_{\succsim} y$), or $|\Sigma_1|>x_1>y_1$ where it necessarily holds that  $e_{\succsim}(x)=e_{\succsim}(y)= 0$ (so, $x I^{IIS}_{\succsim} y$).

To prove implication $(iv)$, consider $x,y \in X$ such that $x P^{IIS}_{\succsim} y$. Then, $e_{\succsim}(x)>e_{\succsim}(y)= 0$, which directly implies $\theta_{\succsim}(x)>^{L}\theta_{\succsim}(y)$ and $x P^{L}_{\succsim} y$, by definition. To prove that $x R^{P}_{\succsim} y$, we distinguish two cases: either  $e_{\succsim}(x)= 1$ and $e_{\succsim}(y) =0$, implying $x_1=|\Sigma_1|>y_1$ (so, $x P^{P}_{\succsim} y$), or $e_{\succsim}(x)> e_{\succsim}(y)\geq 1$, implying $x_1=y_1$ (so, $x I^{P}_{\succsim} y$).

Finally, implication $(ii)$ follows from implications $(iii)$ and $(iv)$.
\end{proof}

\section{Results}
\label{section:results}

\subsection{Main theorems}
\label{subsection:main theorems}
In the paper \cite{Suzuki2025} $R^{IIS}$ is characterized using the following five axioms: 

\begin{theorem}
\label{theorem:IIS}
(\cite{Suzuki2025}) An SRS $R:\domain\rightarrow\codomain$ satisfies NT, WIVIP, IWS, IBS, and INUI if and only if it is $R^{IIS}$.
\end{theorem}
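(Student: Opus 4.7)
My plan is to split the proof into sufficiency (that $R^{IIS}$ satisfies all five axioms) and necessity (that any SRS satisfying them equals $R^{IIS}$), with the necessity argument further split according to whether $e_\succsim(x)>e_\succsim(y)$ or $e_\succsim(x)=e_\succsim(y)$.

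For sufficiency, I would check each axiom by tracking how the nested chain $T_1\supseteq T_2\supseteq\cdots$ changes under the relevant transformation. IWS and IBS behave well because decomposing the worst (respectively, refining the top) class either leaves the $T_k$'s fixed in the relevant range or shifts $e_\succsim(x)$ by a uniform amount for every $x$ in the common intersection, which cannot flip any strict comparison. INUI is verified by noting that moving $\Delta\subseteq\Sigma_k$ upward leaves membership in each $T_{k'}$ unchanged at any $x,y\notin\bigcap\Delta$. NT and WIVIP are essentially immediate from the definition of $e_\succsim$.

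For the necessity direction in the strict case $e_\succsim(x)>e_\succsim(y)$, I would set $k^\ast=e_\succsim(y)+1$ and first consider the dichotomous profile $\succsim^\ast:\Sigma^\ast_1\succ^\ast\Sigma^\ast_2$ with $\Sigma^\ast_1=\Sigma_1\cup\cdots\cup\Sigma_{k^\ast}$. By construction $x\in\bigcap\Sigma^\ast_1$ while $y\notin\bigcap\Sigma^\ast_1$, so WIVIP gives $xP_{\succsim^\ast}y$. I would then iteratively apply IBS to refine $\Sigma^\ast_1$ into $\Sigma_1\succ\cdots\succ\Sigma_{k^\ast}$, each step preserving strict $(x,y)$-preference, and then iteratively apply IWS to refine $\Sigma^\ast_2$ into $\Sigma_{k^\ast+1}\succ\cdots\succ\Sigma_l$, arriving at $xP_\succsim y$ on the original profile.

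The tie case $e_\succsim(x)=e_\succsim(y)=:m$, where the goal is to deduce $xI_\succsim y$, is where I expect the main obstacle. The strategy is to use INUI as a symmetrization tool in combination with NT: I would iteratively reposition coalitions lying in $\Sigma_{m+1},\ldots,\Sigma_l$ (where INUI applies, since neither $x$ nor $y$ belongs to $\bigcap\Sigma_k$ there) in order to reach a profile $\succsim'$ invariant under the transposition $\pi=(x\ y)$, i.e. $\succsim'_\pi=\succsim'$. NT then yields $xI_{\succsim'}y$, and INUI transports the indifference back to $\succsim$. The delicate point is designing the sequence of INUI shifts so that (i) every chosen $\Delta$ satisfies $x,y\notin\bigcap\Delta$, and (ii) the cumulative effect genuinely produces a $\pi$-symmetric profile, typically by packaging each coalition containing exactly one of $\{x,y\}$ together with its $\pi$-image into a common equivalence class. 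Carefully handling the subcase $m=0$, where neither individual lies in $\bigcap\Sigma_1$, and ensuring that the symmetrization induction terminates on the number of classes witnessing $\pi$-asymmetry, will constitute the most intricate bookkeeping of the argument.
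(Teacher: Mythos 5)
The paper itself does not prove this theorem---it is imported from \cite{Suzuki2025}---so your proposal can only be judged on its own merits. Your sufficiency sketch and your treatment of the strict case $e_{\succsim}(x)>e_{\succsim}(y)$ are sound: WIVIP applied to the dichotomous order with top class $\Sigma_{1}\cup\cdots\cup\Sigma_{k^{*}}$, followed by one application of IBS (decomposing that top class into $\Sigma_{1}\succ\cdots\succ\Sigma_{k^{*}}$) and one of IWS (decomposing the bottom class into $\Sigma_{k^{*}+1}\succ\cdots\succ\Sigma_{l}$), recovers $\succsim$ while preserving $xP y$. The only point you leave implicit is that the second class of the dichotomous order is nonempty; this is automatic because $\emptyset\in\mathfrak{X}$ forces $e_{\succsim}(z)\leq l-1$ for every $z$, so $k^{*}=e_{\succsim}(y)+1\leq e_{\succsim}(x)<l$.

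The genuine gap is the tie case, which you acknowledge leaving open, and your sketch of it has two concrete flaws. First, the parenthetical justification ``INUI applies, since neither $x$ nor $y$ belongs to $\bigcap\Sigma_{k}$'' for all $k>m$ is false in general: only $k=m+1$ is guaranteed (a lower class, say $\Sigma_{m+2}$, may consist entirely of coalitions containing both $x$ and $y$). Second, INUI only licenses splitting a subfamily $\Delta$ off a class and placing it immediately above the remainder (or, reading the equality backwards, merging two adjacent classes whose upper member has $x,y\notin\bigcap\Delta$); it does not license the free repositioning needed to ``package each coalition containing exactly one of $\{x,y\}$ together with its $\pi$-image.'' Both problems disappear, and no intricate bookkeeping is needed, if you instead merge top-down: with $m:=e_{\succsim}(x)=e_{\succsim}(y)$, repeatedly merge the block below level $m$ by taking $\Delta=\Sigma_{m+1}\cup\cdots\cup\Sigma_{k}$ at each step; since $x,y\notin\bigcap\Sigma_{m+1}$, every such $\Delta$ satisfies the INUI condition, and you reach $\succsim^{1}:\Sigma_{1}\succ\cdots\succ\Sigma_{m}\succ\Omega$ with $R_{\succsim}\mid_{\{x,y\}}=R_{\succsim^{1}}\mid_{\{x,y\}}$. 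This order is automatically invariant under the transposition $\pi=(x\;y)$: every coalition in $\Sigma_{1},\ldots,\Sigma_{m}$ contains both $x$ and $y$ (because $x,y\in T_{m}$) and is therefore fixed by $\pi$, while $\Omega$ is the complementary family and hence $\pi$-invariant as a set. NT applied to $\succsim^{1}$ gives $xR_{\succsim^{1}}y\iff yR_{\succsim^{1}}x$, completeness turns this into $xI_{\succsim^{1}}y$, and the INUI chain transports the indifference back to $\succsim$; the subcase $m=0$ is the same argument with all of $\mathfrak{X}$ collapsed into a single class.
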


Our main result characterizes lex-cel $R^{L}$ and plurality $R^{P}$ by substituting the independence axioms in Theorem \ref{theorem:IIS}.

\begin{theorem}
\label{theorem:lex-cel}
An SRS $R:\domain\rightarrow\codomain$ satisfies WIVIP, IWS, and SSI if and only if it is $R^L$.    
\end{theorem}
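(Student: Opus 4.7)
The plan is to split into the ``if'' and ``only if'' directions. For the ``if'' direction, I verify each of WIVIP, IWS and SSI for $R^{L}$. WIVIP holds because in a dichotomy $\Sigma_{1}\succ\Sigma_{2}$ with $x\in\bigcap\Sigma_{1}$ and $y\notin\bigcap\Sigma_{1}$ one has $|\Sigma_{1}[x]|=|\Sigma_{1}|>|\Sigma_{1}[y]|$, so $\theta_{\succsim}(x)>^{L}\theta_{\succsim}(y)$. SSI holds because an SSI-move preserves the differential $d_{k}:=|\Sigma_{k}[x]|-|\Sigma_{k}[y]|$ at every class: when $|\Delta[x]|=|\Delta[y]|$, both $|\Sigma_{k_{1}}[\cdot]|$ and $|\Sigma_{k_{2}}[\cdot]|$ change by the same amount for $x$ and $y$, so the lex comparison of $\theta_{\succsim}(x)$ and $\theta_{\succsim}(y)$ is unchanged. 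IWS follows from the identity $\sum_{k}|\Sigma_{k}[z]|=2^{n-1}$, which forces the last coordinates of $\theta_{\succsim}(x)$ and $\theta_{\succsim}(y)$ to coincide whenever the first $l-1$ do; hence if $xP^{L}_{\succsim}y$ then the first differing index $\hat{k}$ satisfies $\hat{k}<l$ and is undisturbed by refining $\Sigma_{l}$.

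For the ``only if'' direction, fix $\succsim:\Sigma_{1}\succ\cdots\succ\Sigma_{l}$ and $x,y\in X$, and let $d_{k}:=|\Sigma_{k}[x]|-|\Sigma_{k}[y]|$. The key observation is that SSI keeps each $d_{k}$ fixed and allows the removal of any class with $d_{k}=0$ by moving its entire content into a neighbor (via $\Delta=\Sigma_{k_{1}}$). Consider first the strict case $\theta_{\succsim}(x)\neq\theta_{\succsim}(y)$; WLOG $\theta_{\succsim}(x)>^{L}\theta_{\succsim}(y)$ with first differing index $\hat{k}$, so $d_{k}=0$ for $k<\hat{k}$ and $d_{\hat{k}}>0$ (note $\hat{k}<l$ since $\sum_{k}d_{k}=0$). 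Successively merging $\Sigma_{1},\ldots,\Sigma_{\hat{k}-1}$ into $\Sigma_{\hat{k}}$ via SSI produces a profile $\succsim^{(1)}$ whose top class $\Sigma_{1}^{(1)}=\Sigma_{1}\cup\cdots\cup\Sigma_{\hat{k}}$ has differential $d_{\hat{k}}>0$. Since $\succsim^{(1)}$ refines the worst class of the dichotomy $\succsim^{\#}:\Sigma_{1}^{(1)}\succ(\Sigma_{\hat{k}+1}\cup\cdots\cup\Sigma_{l})$, by IWS it suffices to prove $xP_{\succsim^{\#}}y$. Decomposing $\Sigma_{1}^{\#}$ by $\{x,y\}$-membership into groups ``both'', ``only-$x$'', ``only-$y$'' and ``neither'' with counts $a,b,c,d$, I have $b-c=d_{1}^{\#}>0$. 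Using SSI I first move a balanced $\Delta_{1}$ consisting of the $c$ only-$y$ coalitions paired with $c$ only-$x$ coalitions out of $\Sigma_{1}^{\#}$, and then move the $d$ individually balanced neither-coalitions out. The resulting top class $\tilde{\Sigma}_{1}$ contains $a$ both-coalitions and $b-c>0$ only-$x$ coalitions, giving $x\in\bigcap\tilde{\Sigma}_{1}\not\ni y$, so WIVIP delivers $xP_{\tilde{\succsim}}y$. Propagating back through SSI and IWS yields $xP_{\succsim}y$.

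In the tie case $\theta_{\succsim}(x)=\theta_{\succsim}(y)$ every class has equal numbers of only-$x$ and only-$y$ coalitions. If $\Sigma_{l}$ has no asymmetric coalition, the singleton $\{x\}$ lives in some $\Sigma_{k^{*}}$ with $k^{*}<l$, and by the tie condition an only-$y$ coalition coexists there; an SSI-move sends this balanced pair into $\Sigma_{l}$, so without loss of generality $\Sigma_{l}$ contains asymmetric coalitions. Pick $S_{y}\in\Sigma_{l}$ with $y\in S_{y}$, $x\notin S_{y}$, and refine $\Sigma_{l}$ as $\{S_{y}\}\succ(\Sigma_{l}\setminus\{S_{y}\})$ to obtain $\succsim'$. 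Then $\theta_{\succsim'}(y)>^{L}\theta_{\succsim'}(x)$ because the first $l-1$ coordinates agree and the $l$-th coordinates are $1$ versus $0$, so the already-established strict case, applied to $\succsim'$ with the roles of $x,y$ swapped, gives $yP_{\succsim'}x$. The contrapositive of IWS then yields $yR_{\succsim}x$. Mirroring the construction with an only-$x$ coalition in $\Sigma_{l}$ gives $xR_{\succsim}y$, hence $xI_{\succsim}y$. The hardest step is this tie case: WIVIP only delivers strict comparisons, so to force indifference I use IWS in both directions via its contrapositive, after SSI preprocessing that guarantees $\Sigma_{l}$ has the asymmetric coalitions needed for the two refinements.
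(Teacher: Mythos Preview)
Your argument is sound in spirit and close to the paper's, but the tie case has a boundary gap: your contrapositive-of-IWS step needs $\succsim$ to have at least two equivalence classes, since the hypothesis of IWS is stated only for $l\geq 2$. When $l=1$ (i.e., $\succsim$ is total indifference on $\mathfrak{X}$), your preprocessing is not triggered---$\Sigma_{1}=\mathfrak{X}$ already contains $\{x\}$ and $\{y\}$---and IWS simply does not connect the one-class $\succsim$ to your refinement $\succsim'$. The fix is easy: pass first, via SSI read in reverse, from the one-class profile to the two-class profile $(\mathfrak{X}\setminus\{\{x\},\{y\}\})\succ\{\{x\},\{y\}\}$ (apply SSI to the two-class profile with $\Delta=\{\{x\},\{y\}\}$, $k_{1}=2$, $k_{2}=1$, which collapses it to one class and yields equality of $R\mid_{\{x,y\}}$), and then run your IWS refinements there.

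Apart from this, your route and the paper's share the same architecture---WIVIP at a dichotomy, SSI to transport balanced families, IWS to bridge to the general profile---but differ in execution. In the strict case, the paper directly isolates $\Gamma\subseteq\Sigma_{\hat{k}}[x]\setminus\Sigma_{\hat{k}}[y]$ of size $x_{\hat{k}}-y_{\hat{k}}$, starts from the dichotomy $\Gamma\succ\Gamma^{c}$ where WIVIP applies, and then inserts $\Sigma_{1},\ldots,\Sigma_{\hat{k}}$ one by one via SSI before a final IWS; you instead merge the first $\hat{k}$ classes into $\Sigma_{\hat{k}}$ via SSI, collapse to a dichotomy via IWS, and only then strip the top class by balanced SSI moves until WIVIP bites. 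In the tie case, the paper argues by contradiction: assume $xP_{\succsim}y$, merge everything to one class by SSI, split off $\{\{x\},\{y\}\}$ via SSI to a two-class profile, and refine via IWS to a profile where the strict case forces $yP x$. You instead give a direct argument via two IWS contrapositives. Both work; the paper's ``merge to one class, then split off $\{\{x\},\{y\}\}$'' detour is exactly what absorbs the $l=1$ boundary automatically, which your version must add by hand.
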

\begin{proof}
The 'if' part is straightforward. We now prove the 'only if' part. Let $R$ be an SRS satisfying the axioms stated. We prove the following. 
(i) for any $\succsim\in\domain$ and $x,y\in X$, if $\theta_{\succsim}(x)>^{L}\theta_{\succsim}(y)$, then $xP_{\succsim}y$; (ii) for any $\succsim\in\domain$ and $x,y\in X$, if $\theta_{\succsim}(x)=\theta_{\succsim}(y)$, then $xI_{\succsim}y$. 

Proof of (i): Let $\succsim\in\domain$ with $\succsim:\Sigma_{1}\succ\cdots\succ\Sigma_{l}$. Let $x,y\in X$, $\theta_\succsim(x):=(x_1,x_2,\cdots,x_l)$ and $\theta_\succsim(y):=(y_1,y_2,\cdots,y_l)$. We assume $\theta_\succsim(x)>^L\theta_\succsim(y)$. By definition, there exists $\hat{k}\in[l]$ such that $x_k=y_k$ for all $k<\hat{k}$ and $x_{\hat{k}}>y_{\hat{k}}$. Let $\Gamma\subseteq\Sigma_{\hat{k}}[x]\setminus\Sigma_{\hat{k}}[y]$ such that $|\Gamma|=x_{\hat{k}}-y_{\hat{k}}$. Let 
\begin{align*}
\succsim_{0}&:\Gamma\succ_{0}(\Sigma_{1}\cup\cdots\cup\Sigma_{l})\setminus\Gamma\\
\succsim_{1}&:\Sigma_{1}\succ_{1}\Gamma\succ_{1}(\Sigma_{2}\cup\cdots\cup\Sigma_{l})\setminus\Gamma\\
\succsim_{2}&:\Sigma_{1}\succ_{2}\Sigma_{2}\succ_{2}\Gamma\succ_{2}(\Sigma_{3}\cup\cdots\cup\Sigma_{l})\setminus\Gamma\\
&\vdots\\
\succsim_{\hat{k}-1}&:\Sigma_{1}\succ_{\hat{k}-1}\Sigma_{2}\succ_{\hat{k}-1}\cdots\succ_{\hat{k}-1}\Sigma_{\hat{k}-1}\succ_{\hat{k}-1}\Gamma\succ_{\hat{k}-1}(\Sigma_{\hat{k}}\cup\cdots\cup\Sigma_{l})\setminus\Gamma\\
\succsim_{\hat{k}}&:\Sigma_{1}\succ_{\hat{k}}\cdots\succ_{\hat{k}}\Sigma_{\hat{k}-1}\succ_{\hat{k}}\Sigma_{\hat{k}}\succ_{\hat{k}}(\Sigma_{\hat{k}+1}\cup\cdots\cup\Sigma_{l}).
\end{align*}
Thus, $\succsim_{0}$ ranks all elements indifferently, except for the elements in $\Gamma$ that are placed in the best equivalence class. For each $k=1,2,\cdots,\hat{k}-1$, $\succsim_{k+1}$ is obtained from $\succsim_{k}$ by shifting $\Sigma_{k}\setminus\Gamma$ above $\Gamma$. Finally, $\succsim_{\hat{k}}$ is obtained from $\succsim_{\hat{k}-1}$ by shifting $\Sigma_{\hat{k}}\setminus\Gamma$ to the class $\Gamma$. 

Subsequently, WIVIP implies $xP_{\succsim_{0}}y$. From the definitions of $\hat{k}$ and $\Gamma$, we can infer that $|\Sigma_{k}[x]|=|\Sigma_{k}[y]|$ for all $k\leq \hat{k}$. Hence, SSI requires that $R_{\succsim}\mid_{\{x,y\}}=R_{\succsim'}\mid_{\{x,y\}}$ for all $k\leq \hat{k}$. Therefore, we conclude that $xP_{\succsim_{\hat{k}}}y$. Notably, $\succsim$ is obtained from $\succsim_{\hat{k}}$ by partitioning the worst equivalence class. 
Hence, a solution satisfying axiom IWS yields the relation. 

Proof of (ii): Assume that there exist $\succsim\in\domain$ and $x,y\in X$ such that $\theta_{\succsim}(x)=\theta_{\succsim}(y)$ and $\neg(xI_{\succsim}y)$. As $R_{\succsim}$ is assumed to be complete ($\because$ $R_{\succsim}\in\codomain$), we obtain either $xP_{\succsim}y$ or $yP_{\succsim}x$. Without a loss of generality, we can assume that $xP_{\succsim}y$. Let
\begin{align*}
(\succsim=)\succsim_{1}&:\Sigma_{1}\succ\cdots\succ\Sigma_{l},\\
\succsim_{2}&:\Sigma_{1}\cup\Sigma_{2}\succ\Sigma_{3}\succ\cdots\succ\Sigma_{l},\\
&\vdots\\
\succsim_{l}&:\Sigma_{1}\cup\cdots\cup\Sigma_{l}.
\end{align*}

From SSI, we obtain $R_{\succsim_{k}}\mid_{\{x,y\}}=R_{\succsim_{k+1}}\mid_{\{x,y\}}$ for all $k\in[l-1]$. Together with the assumption $xP_{\succsim}y$, we obtain $xP_{l}y$. Let $\succsim':\left(\Sigma_{1}\cup\cdots\cup\Sigma_{l}\right)\setminus\{\{x\},\{y\}\}\succ\{\{x\},\{y\}\}$. Thus, $\succsim'$ is obtained from $\succsim_{l}$ by reducing the positions of $\{x\}$ and $\{y\}$. Using $xP_{\succsim_{l}}y$ and SSI, we conclude that $xP_{\succsim'}y$. 

Finally, let $\succsim'':\left(\Sigma_{1}\cup\cdots\cup\Sigma_{l}\right)\setminus\{\{x\},\{y\}\}\succ\{\{y\}\}\succ\{\{x\}\}.$ Thus, $\succsim''$ is obtained from $\succsim'$ by resolving ties in the worst equivalence class. Consequently, $xP_{\succsim'}y$, combined with IWS, implies that $xP_{\succsim}y$. However, we confirm that $\theta_{\succsim''}(y)>^{L}\theta_{\succsim''}(x)$. Hence, Case (i) states that $yP_{\succsim}x$, leading to a contradiction. 
\end{proof}

\begin{theorem}
\label{theorem:plurality}
An SRS $R:\domain\rightarrow\codomain$ satisfies NT, WIVIP, TO, and SI if and only if it is $R^{P}$. 
\end{theorem}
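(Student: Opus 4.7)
The `if' part is a routine verification: NT and WIVIP follow directly from the definition of $R^{P}$, TO holds since $R^{P}_{\succsim}$ reads only $\Sigma_{1}$, and SI holds because the balanced condition $|\Delta[x]|=|\Delta[y]|$ forces $x_{1}$ and $y_{1}$ to change by the same amount under any shift touching $\Sigma_{1}$. For the converse, let $R$ satisfy the four axioms and fix $\succsim\in\domain$ with top class $\Gamma:=\Sigma_{1}$. By TO, $R_{\succsim}$ is determined by $\Gamma$ alone, so I may freely replace $\succsim$ by the dichotomous $\Gamma\succ(\mathfrak{X}\setminus\Gamma)$. Combining TO with SI then yields a ``balanced modification'' principle: $R_{\succsim}\mid_{\{x,y\}}$ is unchanged when $\Gamma$ is replaced by $\Gamma\setminus\Delta$ for any $\Delta\subsetneq\Gamma$ with $|\Delta[x]|=|\Delta[y]|$, or by $\Gamma\cup\Delta$ for any $\Delta\subsetneq\mathfrak{X}\setminus\Gamma$ with $|\Delta[x]|=|\Delta[y]|$. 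Both operations preserve the difference $x_{1}-y_{1}$, in agreement with the definition of $R^{P}$.

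I then split into cases by the sign of $x_{1}-y_{1}$, the case $x_{1}<y_{1}$ being symmetric. If $x_{1}>y_{1}$, I remove from $\Gamma$ in three balanced steps: (a) all sets disjoint from $\{x,y\}$; (b) all sets containing both $x$ and $y$; and (c) matched pairs, each consisting of one coalition of the form $\{x\}\cup A$ and one of the form $\{y\}\cup B$, until only coalitions of the former kind remain. The resulting $\Gamma_{\mathrm{final}}$ is a non-empty proper subfamily of $\mathfrak{X}$ consisting solely of coalitions $\{x\}\cup A$, so $x\in\bigcap\Gamma_{\mathrm{final}}$ and $y\notin\bigcap\Gamma_{\mathrm{final}}$; WIVIP then yields $xP_{\succsim}y$. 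If $x_{1}=y_{1}$, the analogous reductions (a) and (b) either terminate at a top class already invariant under the transposition $\pi=(xy)$ --- in which case NT together with TO immediately gives $xI_{\succsim}y$ --- or they leave an equal positive number of $x$-only and $y$-only coalitions, which I pair off to reach $\Gamma_{\mathrm{final}}=\{\{x\}\cup A,\{y\}\cup B\}$. If $A=B$ then $\Gamma_{\mathrm{final}}$ is $\pi$-invariant; if $A\neq B$, I use SI in its ``additive'' direction to adjoin the balanced pair $\{\{x\}\cup B,\{y\}\cup A\}$, producing a $\pi$-invariant four-element top class. NT then delivers $xI_{\succsim}y$.

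The main obstacle is the subcase $x_{1}=y_{1}$ with $A\neq B$: the ``removal'' use of SI alone cannot symmetrize $\Gamma_{\mathrm{final}}$, so the ``additive'' direction of SI (shifting balanced subsets \emph{up} into $\Sigma_{1}$) must also be exploited, and the chosen pair $\{\{x\}\cup B,\{y\}\cup A\}$ is precisely what makes $\pi$ act as a set-theoretic double transposition on the enlarged top class. The boundary case $n=2$ is automatic, since $A,B\subseteq X\setminus\{x,y\}=\emptyset$ forces $A=B$ and $\Gamma$ is then already $\pi$-invariant; likewise, the $x_{1}>y_{1}$ reduction never produces $\Gamma_{\mathrm{final}}=\mathfrak{X}$, because only $2^{n-2}$ coalitions of the form $\{x\}\cup A$ exist in $\mathfrak{X}$, leaving $\mathfrak{X}\setminus\Gamma_{\mathrm{final}}$ non-empty for the application of WIVIP.
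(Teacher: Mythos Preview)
Your proof is correct and follows the same high-level strategy as the paper: use TO to reduce to a dichotomous ranking, use SI to rearrange the top class while preserving $R|_{\{x,y\}}$, and then conclude via WIVIP (in the strict case) or NT (in the tie case). The tactical choices differ, however. For $x_{1}>y_{1}$ the paper works in the opposite direction: it \emph{starts} from the dichotomous ranking whose top class is a suitable $\Gamma\subseteq\Sigma_{1}[x]\setminus\Sigma_{1}[y]$, invokes WIVIP there, and then uses a single SI move (shifting $\Sigma_{1}\setminus\Gamma$ up) plus TO to reach $\succsim$; your multi-step peeling (disjoint sets, then both-containing sets, then matched pairs) arrives at the same endpoint but via several applications of SI. For $x_{1}=y_{1}$ the paper again aims for a smaller target: it ensures a single coalition $S\ni x,y$ sits in the top class (adding $\{x,y\}$ via one SI move if necessary), shrinks the top class to the singleton $\{S\}$ with one further SI move, and then applies NT; your route instead reduces to a two-element top class and, in the $A\neq B$ subcase, needs an additional ``additive'' SI step to restore $(xy)$-invariance. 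Both routes are valid; the paper's is shorter because it exploits that a singleton $\{S\}$ with $x,y\in S$ is automatically $(xy)$-invariant, avoiding the $A\neq B$ branching you had to handle.
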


\begin{proof}
The 'if' part is straightforward. We prove the “only if” aspect. Let $R$ be an SRS satisfying the axioms stated. Let $\succsim\in\domain$ with $\succsim:\Sigma_{1}\succ\cdots\succ\Sigma_l$ and $x,y\in X$. Let $\theta_\succsim(x):=(x_1,x_2,\cdots,x_l)$ and $\theta_\succsim(y):=(y_1,y_2,\cdots,y_l)$. We prove that (i) if $x_1=y_1$, then $xI_\succsim y$, and (ii) if $x_1>y_1$, then $xP_\succsim y$. 

Proof of (i): Assume that $x_1=y_1$. Define
\begin{align*}
\succsim_{1}(:=\succsim)&:\Sigma_{1}\succ\cdots\succ\Sigma_{l},\\
\succsim_{2}&:\Sigma_{1}\succ\Sigma_{2}\cup\cdots\cup\Sigma_{l}.\\
\end{align*}

From TO, we obtain $R_{\succsim_{1}}=R_{\succsim_{2}}$. 

We define $S\in\mathfrak{X}$ and $\succsim_{3}\in\domain$ as follows: 
\begin{itemize}
    \item If $\Sigma_{1}[x]\cap\Sigma_{1}[y]\neq\emptyset$, then let $S\in\Sigma_{1}[x]\cap\Sigma_{1}[y]$ and $\succsim_{3}:=\succsim_{2}$. 
    \item Otherwise, i.e., if $\Sigma_{1}[x]\cap\Sigma_{1}[y]=\emptyset$, then let $S:=\{x,y\}$ and $\succsim_{3}:\Sigma_{1}\cup\{S\}\succ\left(\Sigma_{2}\cup\cdots\cup\Sigma_{l}\right)\setminus\{S\}.$ 
\end{itemize}

In either case, we obtain 
$R_{\succsim_{2}}\mid_{\{x,y\}}=R_{\succsim_{3}}\mid_{\{x,y\}}$ ($\because$ for the former case, it is trivial because $\succsim_{2}=\succsim_{3}$; for the latter case, $\Sigma_{1}[x]\cap\Sigma_{1}[y]=\emptyset$ implies that $\{x,y\},\{x,y,z\}\in\Sigma_{2}\cup\cdots\cup\Sigma_{l}$. Thus, $\{S\}\subsetneq\Sigma_{2}\cup\cdots\cup\Sigma_{l}$. Hence, the equation holds for SI). 

 Now, let
$\succsim_{4}:\{S\}\succ\left(\Sigma_{1}\cup\cdots\cup\Sigma_{l}\right)\setminus\{S\}$. Thus, $\succsim_{4}$ is obtained from $\succsim_{3}$ by reducing the position of $\Sigma_{1}\setminus\{S\}$. By the assumption $x_{1}=y_{1}$, it holds $|\left(\Sigma_{1}\setminus\{S\}\right)[x]|=|\left(\Sigma_{1}\setminus\{S\}\right)[y]|$. Thus, SI implies that $R_{\succsim_{3}}\mid_{\{x,y\}}=R_{\succsim_{4}}\mid_{\{x,y\}}$. In conclusion, the above argument shows that $R_{\succsim}\mid_{\{x,y\}}=R_{\succsim_{4}}\mid_{\{x,y\}}$. NT implies that $xI_{\succsim_{4}}y$. Thus, we obtain $xI_{\succsim}y$.


Proof of (ii): Assume that $x_1>y_1$. From the definitions of $x_1$ and $y_1$, there exists $\Gamma\subseteq\Sigma_1[x]\setminus\Sigma_1[y]$ such that $|\Gamma|=x_1-y_1$. Subsequently, it follows that $|\left(\Sigma_{1}\setminus\Gamma\right)[x]|=|\left(\Sigma_{1}\setminus\Gamma\right)[y]|$. Let
\begin{align*}
\succsim_{1}&:\Gamma\succ\left(\Sigma_{1}\cup\cdots\cup\Sigma_{l}\right)\setminus\Gamma,\\
\succsim_{2}&:\Sigma_{1}\succ\Sigma_{2}\cup\cdots\cup\Sigma_{l}.
\end{align*}
Using WIVIP, we obtain $xP_{\succsim_{1}}y$. Because $\succsim_{2}$ is obtained from $\succsim_{1}$ by shifting the position of $\Sigma_{1}\setminus\Gamma$, the SI requires $R_{\succsim_{1}}\mid_{\{x,y\}}=R_{\succsim_{2}}\mid_{\{x,y\}}$. This implies that $xP_{\succsim_{2}}y$. As the best equivalence class of $\succsim_{2}$ is the same as that of $\succsim$, TO implies that $R_{\succsim_{2}}=R_{\succsim}$. Therefore, we conclude that $xP_{\succsim}y$. 
\end{proof}

Notably, $F^{L}$ satisfies NT trivially. Hence, the three characterizations (Theorems \ref{theorem:IIS}, \ref{theorem:lex-cel}, and \ref{theorem:plurality}) state that (i) the three SRSs share NT and WIVIP, and (ii) they differ only in the independence axioms (IWS, IBS, INUI, TO, SI, and SSI). Thus, the theorems effectively capture the differences among the three SRSs in terms of independence. 

By dividing the axioms into blocks, we can achieve a better understanding of the characteristics between the three SRSs. 

\begin{align*}
    R^{IIS} &= \text{[NT+WIVIP]}
    + \text{[IWS+IBS]} +\text{[INUI]}, \\
    R^{L}   &= \text{[NT+WIVIP]}
    + \text{[IWS]} +\text{[SSI}(\Rightarrow
    \text{SI)]},\\
    R^{P} &=\text{[NT+WIVIP]}+\text{[TO}(
    \Rightarrow \text{IWS)]}+\text{[SI]}\\
\end{align*}

The first block [NT+WIVIP] shows the common basis of the three SRSs. 

The axioms in the second block (IWS, IBS, and TO) concern independence from the extremal elements (see Remark \ref{remark:independence from the ends}). Notably, TO implies IWS. Thus, from the perspective of independence from the extremal elements, we prove that (i) each $R^{IIS}$, $R^{L}$, and $R^{P}$ satisfies IWS as a common basis; (ii) $R^{IIS}$ also satisfies IBS; 
and (iii) $R^{P}$ satisfies TO (i.e., independent of elements worse than the best equivalence class). In Subsection \ref{subsection:independence axioms and SRSs: a review}, we show that $R^{IIS}$ does not satisfy TO, $R^{L}$ does not satisfy IBS nor TO, and $R^{P}$ does not satisfy IBS. Hence, the second block effectively describes the differences between the three SRSs.

The axioms in the third block (SI, SSI, and INUI) are related to independence from shifts of certain family of sets $\Delta$ (see the discussion immediately following Definition \ref{definition:INUI}). In Subsection \ref{subsection:independence axioms and SRSs: a review}, we confirm that $R^{IIS}$ does not satisfy SI or SSI, $R^{L}$ does not satisfy INUI, and $R^{P}$ does not satisfy SSI or INUI. Hence, the third block effectively captures the difference between the three SRSs in terms of their independence from the slide of $\Delta$. 

\subsection{Independence axioms and SRSs: a review}
\label{subsection:independence axioms and SRSs: a review}

 We add additional SRSs for further discussion. For two finite vectors $a:=(a_{1},\cdots,a_{l})$ and $b:=(b_{1},\cdots ,b_{l})$, we denote $a\geq^{DL}b $ if and only if $ a_{k}=b_{k}$, for all $k\in[l]$, or $\exists \hat{k}\in[l]$ such that $x_{k}=y_{k}$,  for all $k>\hat{k}$ and $x_{\hat{k}}<y_{\hat{k}}$.
We define the \textit{dual lexicographic excellence solution} (dual lex-cel) $R^{DL}$ for any $\succsim\in\domain$ and $x,y\in X$, $xR^{DL}_{\succsim}y\iff \theta_{\succsim}(x)\geq^{DL}\theta_{\succsim}(y)$. 

We define the \textit{constant-$X$ rule} $R^{\text{constX}}$ as  $xR^{\text{constX}}_{\succsim}y$, for all $x,y\in X$ and $\succsim\in\domain$. 

We define \textit{split-plurality} $R^{splitP}$, for any $\succsim\in\domain$ and for any $x,y\in X$, $xR^{\text{splitP}}_{\succsim}y\iff s^{\text{split}}_{\succsim}(x)\geq s^{\text{split}}_{\succsim}(y)$, where $s^{\text{split}}_{\succsim}(z):=\sum_{S\in\Sigma_{1}:z\in S} \frac{1}{|S|}$. 

Let $\vartriangleright$ be the fixed linear order of $X$. We define plurality with tie-breaking by $\vartriangleright$, denoted as $R^{P,\vartriangleright}$, for any $\succsim\in\domain$ and $x,y\in X$, let $xR^{
P,\vartriangleright}_{\succsim}(y)\iff (xP_{\succsim}^{P}y\ \text{or}\ (xI^{P}_{\succsim}y\text{ and }x\vartriangleright y))$.

Table \ref{table:SRSs and independence axioms} provides an overview of the relationship between SRSs and the axioms. In each cell, 1 (0) indicates that the SRS in the corresponding row satisfies (does not satisfy) the axiom in the column. For instance, all seven SRSs except $R^{DL}$ satisfy IWS. 
\begin{table}[h]
\centering
    \caption{SRSs and independence axioms}
    \label{table:SRSs and independence axioms}

\begin{tabular}{|c||c|c|c|c|c|c|c|c|c|}
\hline
 & NT & WIVIP & IWS & IBS & TO & SI & SSI & INUI & IIC \\\hline\hline
$R^{IIS}$ & 1 & 1 & 1 & 1 & 0 & 0 & 0 & 1 & 0 \\
\hline
$R^{L}$ & 1 & 1 & 1 & 0 & 0 & 1 & 1 & 0 & 1 \\
\hline
$R^{P}$ & 1 & 1 & 1 & 0 & 1 & 1 & 0 & 0 & 0 \\
\hline
$R^{DL}$ & 1 & 1 & 0 & 1 & 0 & 1 & 1 & 0 & 1 \\
\hline
$R^{\text{constX}}$ & 1 & 0 & 1 & 1 & 1 & 1 & 1 & 1 & 1 \\
\hline
$R^{\text{splitP}}$ & 1 & 1 & 1 & 0 & 1 & 0 & 0 & 0 & 0 \\
\hline
$R^{P,\vartriangleright}$ & 0 & 1 & 1 & 0 & 1 & 1 & 0 & 0 & 0 \\
\hline
\end{tabular}
\end{table}

Most cells are verified in a straightforward manner. Therefore, we present a sketch of the proofs for some nontrivial cells as follows: 

\begin{itemize}
\item On the row of $R^{IIS}$. See \cite{Suzuki2025} for NT, WIVIP, IWS, IBS, and INUI. The cell of TO is straightforward. $R^{IIS}$ does not satisfy SI, SSI, and IIC, indeed, let $x,y\in X$ and $\succsim:\{\{x\}\}\succ\{\{x\}\}^c$ and take $\succsim':\{\{x\},X\}\succ\{\{x\},X\}^c$ (i.e., a weak order obtained from $\succsim$ by shifting the position of $X$). Subsequently, each SI, SSI, and IIC requires that the social ranking between $x$ and $y$ remains the same between $\succsim$ and $\succsim'$. However, we obtain $xP^{IIS}_{\succsim}y$ and $xI^{IIS}_{\succsim'}y$. This proves that $R^{IIS}$ satisfies none of SI, SSI, and IIC. 
\item On the row of $R^{L}$ (the row of $R^{DL}$ is also verified similarly). See \cite{Bernardi2019} for NT, IWS, and IBS. By definition of $\geq^{L}$, the cells of TO, SI, SSI, and IIC are straightforward. $R^{L}$ does not satisfy INUI, indeed, let $x,y,z\in X$, and let $\succsim,\succsim'\in\domain$ be such that $\succsim:\mathfrak{X}$ and $\succsim':\Sigma\succ\mathfrak{X}\setminus\Sigma$, where $\Sigma:=\{\{x\},\{z\}\}$. Since $x,y\notin \bigcap \Sigma$, INUI demands that the social ranking between $x$ and $y$ remains the same between $\succsim$ and $\succsim'$. However, we obtain $xI^{L}_{\succsim}y$ and $xP_{\succsim'}y$. This proves that $R^{L}$ does not satisfy INUI. 
\item The rows of $R^{P}$, $R^{\text{constX}}$, and $R^{P,\vartriangleright}$ are straightforward. 
\item On the row of $R^{\text{splitP}}$. The cells for NT, WIVIP, IWS, IBS, and TO are straightforward. To indicate that $R^{\text{splitP}}$ does not satisfy SI or SSI, let $\succsim:\{X\}\succ\{X\}^c$, and $\succsim':\{X,\{x\},\{y,z\}\}\succ\{X,\{x\},\{y,z\}\}^c$. Then, SI and SSI demand that the social ranking between $x$ and $y$ remains the same between $\succsim$ and $\succsim'$. However, $xI^{\text{splitP}}_{\succsim}y$ and $xP^{\text{splitP}}_{\succsim'}y$. Hence, $R^{\text{splitP}}$ does not satisfy SI or SSI. One can prove the cell of INUI using the same approach as for the row $R^{L}$. To prove that $R^{\text{splitP}}$ does not satisfy IIC, let $\succsim:\{\{x\}\}\succ\{\{x\}\}^c$ and $\succsim':\{X\}\succ\{\{x\}\}\succ\{X,\{x\}\}^c$. Subsequently, IIC demands that the social ranking between $x$ and $y$ remains the same between $\succsim$ and $\succsim'$. However, we obtain $xP_{\succsim}^{\text{splitP}}y$ and $xI^{\text{splitP}}_{\succsim'}y$. Hence, $R^{\text{splitP}}$ does not satisfy IIC. 
\end{itemize}

As shown in Table \ref{table:SRSs and independence axioms}, we obtain the following. 

\begin{proposition}
The axioms in Theorems \ref{theorem:lex-cel} and \ref{theorem:plurality} are logically independent.  \end{proposition}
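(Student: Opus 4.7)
The plan is to establish logical independence by exhibiting, for each axiom appearing in the two characterizations, an SRS that satisfies the remaining axioms of the characterization but fails the one in question. Concretely, for a theorem characterizing an SRS by axioms $A_1,\dots,A_m$, I would produce $m$ distinct examples $R^{(1)},\dots,R^{(m)}$ such that each $R^{(j)}$ satisfies $\{A_i : i\neq j\}$ and fails $A_j$. The existence of such examples is precisely what Table~\ref{table:SRSs and independence axioms} is designed to supply, so the body of the proof is a table lookup followed by a pointer to the verifications already sketched in Subsection~\ref{subsection:independence axioms and SRSs: a review}.

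For Theorem~\ref{theorem:lex-cel} (axioms WIVIP, IWS, SSI) I would read off the following witnesses from the table: $R^{\text{constX}}$ satisfies IWS and SSI but fails WIVIP; $R^{DL}$ satisfies WIVIP and SSI but fails IWS; and $R^{IIS}$ (or alternatively $R^{\text{splitP}}$) satisfies WIVIP and IWS but fails SSI. This exhibits three SRSs distinct from $R^L$ showing that none of the three axioms is redundant in the characterization.

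For Theorem~\ref{theorem:plurality} (axioms NT, WIVIP, TO, SI) I would proceed analogously: $R^{P,\vartriangleright}$ satisfies WIVIP, TO, and SI but fails NT; $R^{\text{constX}}$ satisfies NT, TO, and SI but fails WIVIP; $R^{L}$ itself satisfies NT, WIVIP, and SI but fails TO; and $R^{\text{splitP}}$ satisfies NT, WIVIP, and TO but fails SI. Each of these SRSs is distinct from $R^P$, establishing independence of the four axioms.

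Since the table entries used above have already been argued (either straightforwardly or via the explicit constructions given in Subsection~\ref{subsection:independence axioms and SRSs: a review}), no further verification work is required in this proof beyond pointing to the relevant rows. The only real point of care, and what I would flag as the main obstacle, is making sure that for every witness I cite, all the non-target axioms actually hold: in particular, the failure of SSI for $R^{\text{splitP}}$ and of SI for $R^{\text{splitP}}$ rely on the concrete counter-instances worked out in that subsection, and the validity of TO together with SI for $R^{P,\vartriangleright}$ has to be checked without appealing to NT. Once these entries are taken as granted from the table, the proof reduces to the three-line/four-line enumeration above.
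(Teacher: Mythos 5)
Your proposal is correct and follows essentially the same route as the paper: the paper's own proof is simply a pointer to Table~\ref{table:SRSs and independence axioms}, from which the required witnesses are read off, exactly as you do. Your explicit choices ($R^{\text{constX}}$, $R^{DL}$, $R^{IIS}$ for Theorem~\ref{theorem:lex-cel}; $R^{P,\vartriangleright}$, $R^{\text{constX}}$, $R^{L}$, $R^{\text{splitP}}$ for Theorem~\ref{theorem:plurality}) all match the table entries, so the argument goes through.
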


\section{Conclusion}
\label{section:conclusion}

This study examines SRSs in terms of independence axioms. We provide new characterizations of lex-cel $R^{L}$ (Theorem \ref{theorem:lex-cel}) and plurality $R^{P}$ (Theorem \ref{theorem:plurality}) by substituting the independence axioms in the existing characterization of IIS $R^{IIS}$. Our results reveal the characteristics of these three SRSs in terms of alternative combinations of independence axioms and shed new insight on their fundamental connections. 

\section*{Declarations}
\subsection*{Funding}
Takahiro Suzuki was supported by JSPS KAKENHI Grant
Number JP21K14222 and JP23K22831. Stefano Moretti acknowledges financial support from the ANR project THEMIS (ANR-20-CE23-0018). Michele Aleandri is member of GNAMPA of the Istituto Nazionale di Alta Matematica (INdAM) and is supported by Project of Significant National Interest – PRIN 2022 of title “Impact of the Human Activities on the Environment and Economic Decision Making in a Heterogeneous Setting: Mathematical Models and Policy Implications”- Codice Cineca: 20223PNJ8K- CUP I53D23004320008.
\subsection*{Competing interests}
The authors have no relevant financial or non-financial interests to disclose.


\begin{thebibliography}{10}

\bibitem{Aleandri2024}
Michele Aleandri, Felix Fritz, and Stefano Moretti.
\newblock Desirability and social ranking.
\newblock {\em Social Choice and Welfare}, pages 1--43, 2025.

\bibitem{Algaba2021}
Encarnaci{\'{o}}n Algaba, Stefano Moretti, Eric R{\'{e}}mila, and Philippe Solal.
\newblock {Lexicographic solutions for coalitional rankings}.
\newblock {\em Social Choice and Welfare}, 57(4):817--849, 2021.

\bibitem{Arrow1951}
KJ~Arrow.
\newblock {\em {Social choice and individual values}}.
\newblock Wiley, New York, 1951.

\bibitem{Barbera2023a}
Salvador Barber{\`{a}} and Walter Bossert.
\newblock {Opinion aggregation: Borda and Condorcet revisited}.
\newblock {\em Journal of Economic Theory}, 210:105654, 2023.

\bibitem{Bernardi2019}
Giulia Bernardi, Roberto Lucchetti, and Stefano Moretti.
\newblock {Ranking objects from a preference relation over their subsets}.
\newblock {\em Social Choice and Welfare}, 52(4):589--606, 2019.

\bibitem{Haret2019a}
Adrian Haret, Hossein Khani, Stefano Moretti, and Meltem Ozturk.
\newblock {Ceteris paribus Majority for social ranking}.
\newblock {\em 27th International Joint Conference on Artificial Intelligence (IJCAI-ECAI-18), Jul 2018, Stockholm, Sweden}, pages 303--309, 2019.

\bibitem{Khani2019}
Hossein Khani, Stefano Moretti, and Meltem Ozturk.
\newblock {An Ordinal Banzhaf Index for Social Ranking}.
\newblock In {\em 28th International Joint Conference on Artificial Intelligence (IJCAI 2019), Aug 2019, Macao, China.}, pages 378--384, 2019.

\bibitem{Moretti2017}
Stefano Moretti and Meltem {\"{O}}zt{\"{u}}rk.
\newblock {Some axiomatic and algorithmic perspectives on the social ranking problem}.
\newblock In J.~Rothe, editor, {\em Algorithmic Decision Theory. ADT 2017. Lecture Notes in Computer Science (including subseries Lecture Notes in Artificial Intelligence and Lecture Notes in Bioinformatics)}, volume 10576 LNAI, pages 166--181. Springer, Cham, 2017.

\bibitem{Patty2019}
John~W. Patty and Elizabeth~Maggie Penn.
\newblock {A defense of Arrow's independence of irrelevant alternatives}.
\newblock {\em Public Choice}, 179(1-2):145--164, 2019.

\bibitem{Saitoh2022}
Hiroki Saitoh.
\newblock {\em {Characterization of tie-breaking plurality rules}}, volume~59.
\newblock Springer Berlin Heidelberg, 2022.

\bibitem{Sekiguchi2012}
Yohei Sekiguchi.
\newblock {A characterization of the plurality rule}.
\newblock {\em Economics Letters}, 116(3):330--332, 2012.

\bibitem{suzuki2025mill}
Takahiro Suzuki, Michele Aleandri, and Stefano Moretti.
\newblock Mill's canons meet social ranking: A characterization of plurality.
\newblock {\em arXiv preprint arXiv:2505.10187}, 2025.

\bibitem{Suzuki2021}
Takahiro Suzuki and Masahide Horita.
\newblock {Social Ranking Problem Based on Rankings of Restricted Coalitions}.
\newblock In D.C. Morais, L.~Fang, and M.~Horita, editors, {\em Contemporary Issues in Group Decision and Negotiation. GDN 2021. Lecture Notes in Business Information Processing, vol.420}, volume 420, pages 55--67. Springer, Cham, 2021.

\bibitem{Suzuki2024}
Takahiro Suzuki and Masahide Horita.
\newblock {Consistent Social Ranking Solutions}.
\newblock {\em Social Choice and Welfare}, 62:549--569, 2024.

\bibitem{Suzuki2024c}
Takahiro Suzuki and Masahide Horita.
\newblock Sabotage-proof social ranking solutions.
\newblock {\em Theory and Decision}, pages 1--20, 2024.

\bibitem{Suzuki2024a}
Takahiro Suzuki, Yu~Maemura, and Masahide Horita.
\newblock {A Unified Understanding of Majority Rule, CP Majority Rule, and Their Variants.}
\newblock 2024.

\bibitem{Suzuki2025}
Takahiro Suzuki, Stefano Moretti, and Michele Aleandri.
\newblock {Aggregating Preferences over Criteria: a Characterization}.
\newblock 2025.

\bibitem{Yeh2008}
Chun~Hsien Yeh.
\newblock {An efficiency characterization of plurality rule in collective choice problems}.
\newblock {\em Economic Theory}, 34(3):575--583, 2008.

\bibitem{Young1974a}
H.~P. Young.
\newblock {An Axiomatization of Borda's Rule}.
\newblock {\em Journal of Economic Theory}, 9:43--52, 1974.

\end{thebibliography}

\end{document}